\definecolor{myurlcolor}{rgb}{0,0,0.4}
\definecolor{mycitecolor}{rgb}{0,0.5,0}
\definecolor{myrefcolor}{rgb}{0.5,0,0}
\newcommand*{\addFileDependency}[1]{
  \typeout{(#1)}
  \@addtofilelist{#1}
  \IfFileExists{#1}{}{\typeout{No file #1.}}
}
\newcommand*{\myexternaldocument}[1]{
    \externaldocument{#1}
    \addFileDependency{#1.tex}
    \addFileDependency{#1.aux}
}
\newcommand{\beq}[0]{\begin{equation}}
\newcommand{\eeq}[0]{\end{equation}}
\newcommand{\one}{\leavevmode\hbox{\small1\normalsize\kern-.33em1}}
\def\be{\begin{equation}}
\def\ee{\end{equation}}
\def\ben{\begin{eqnarray}}
\def\een{\end{eqnarray}}
\def\eea{\end{array}}
\def\bea{\begin{array}}
\newcommand{\Tr}[1]{\mathrm{Tr}#1}
\newcommand{\bei}{\begin{itemize}}
\newcommand{\eei}{\end{itemize}}
\newcommand{\ket}[1]{|#1\rangle}
\newcommand{\bra}[1]{\langle#1|}
\newcommand{\proj}[1]{\ket{#1}\!\!\bra{#1}}
\newcommand{\I}{\mathbbm{1}}
\renewcommand{\emph}[1]{\textbf{#1}}
\newtheorem*{rep@theorem}{\rep@title}
\newcommand{\newreptheorem}[2]{%
\newenvironment{rep#1}[1]{%
 \def\rep@title{#2 \ref{##1}}%
 \begin{rep@theorem}}%
 {\end{rep@theorem}}}
\theoremstyle{plain}
\newtheorem{thm}{Theorem}%[section]
\newtheorem*{thm*}{Theorem}
\newtheorem{fakt}{Fact}
\newtheorem{ass}{Assumption}
\newtheorem{defn}[thm]{Definition}
\theoremstyle{definition}
\theoremstyle{remark}
\begin{document}

\title{Certification of unbounded randomness with arbitrary noise}
\author{Shubhayan Sarkar}
\email{shubhayan.sarkar@ulb.be}
\affiliation{Laboratoire d’Information Quantique, Université libre de Bruxelles (ULB), Av. F. D. Roosevelt 50, 1050 Bruxelles, Belgium}

%%%%%%%%%%%%%%%%%%%%%%%%%%%%%%%%%%%%%%%%%%%%%%%%%%%%%%%%%%%%%%%%%%%
\begin{abstract}	
Random number generators play an essential role in cryptography and key distribution. It is thus important to verify whether the random numbers generated from these devices are genuine and unpredictable by any adversary. Recently, quantum nonlocality has been identified as a resource that can be utilised to certify randomness. Although these schemes are device-independent and thus highly secure, the observation of quantum nonlocality is extremely difficult from a practical perspective. In this work, we provide a scheme to certify unbounded randomness in a semi-device-independent way based on the maximal violation of Leggett-Garg inequalities. Interestingly, the scheme is independent of the choice of the quantum state, and consequently even classical noise like a thermal state or even microwave background radiation could be utilized to self-test quantum measurements and generate unbounded randomness making the scheme highly efficient for practical purposes.
\end{abstract}

%%%%%%%%%%%%%%%%%%%%%%%%%%%%%%%%%%%%%%%%%%%%%%%%%%%%%%%%%%%%%%%%%%%

\maketitle

{\it{Introduction---}}
Random numbers play a crucial role in cryptography and key distribution, serving as a fundamental ingredient for ensuring the security and confidentiality of sensitive information. These classical random number generators are based on the limited knowledge of the physical process that generates these numbers. Consequently, one needs to trust that the knowledge of the process is completely hidden from any adversary who might have access to these devices. The randomness of such numbers is thus certified in a device-dependent way.

Unlike classical physics where in principle events are determined with certainty, quantum theory describes the behavior of particles and systems in terms of probabilities. Further on, the unpredictability of measurement outcomes in quantum theory is intrinsic and not due to ignorance, thus serving as an excellent tool for generating random numbers. In recent times, the concept of quantum non-locality, manifested by the violation of Bell inequalities \cite{Bell}, has emerged as a means to certify randomness in a device-independent (DI) manner \cite{di4,AM16}. This implies that the assessment of randomness is decoupled from the specific physical characteristics or details of the experimental setup. There are several schemes that utilize quantum nonlocality for DI certification of randomness \cite{colbeck2011, random0, pironio222, random1, remik16,remik17, sarkar,sarkar5,Armin1}.

However, from a practical perspective, observation of quantum non-locality in a loophole-free way is an extremely difficult task. All of these experiments are highly sensitive to noise and require highly entangled sources which is a costly resource \cite{Bellexp1, Bellexp2, Bellexp3, Bellexp4}. Furthermore, the device-independent randomness generation schemes suffer from low rates and are highly sensitive to detector noise \cite{DIrand1, DIrand2, DIrand3, DIrand4} and thus highly demanding from a practical perspective. As a consequence, it is worth exploring scenarios that are noise-resistant and easy to implement. In this regard, some physically well-motivated assumptions can be made on the devices which do not compromise much over security but are easier to implement. Such schemes are known as semi-device-independent (SDI). One such assumption is that one of the parties involved in the experiment is fully trusted, that is, the measurements performed by the trusted party are known. Such schemes are considered to be one-sided device-independent (ISDI) \cite{steerand2,steerand3,steerand4,sarkar12,Sarkar_2024}. In particular, Ref. \cite{sarkar12} proposes a 1SDI scheme to certify the optimal randomness from measurements with arbitrary number outcomes. 

In this work, we consider a sequential scenario inspired by Leggett-Garg (LG) inequalities \cite{LG} where a single system is measured in a "time-like" separated way. Any violation of LG inequality implies that quantum theory violates the notion of "memoryless" hidden variable models, which as a matter of fact can also be violated in classical physics. For instance, even a classical pre-programmed device can reproduce any observed correlations in the sequential scenario as the device might have a record of the previous inputs and outputs. Consequently, an assumption that we impose in this work is that the correlations obtained in the experiment are generated by input-consistent measurements acting on some quantum state making the proposed scheme semi-device-independent. For our purpose, we consider the generalized LG inequality with arbitrary number of inputs \cite{lg2} and self-test qubit measurements spanning the entire $X-Z$ plane up to the presence of local unitaries. For a note, self-testing of quantum measurements using the LG inequalities for the particular case of four inputs was proposed in \cite{Jeba} and its generalization to arbitrary number of outcomes was proposed in \cite{Das} that assumed a particular form of the initial quantum state. Then, we utilise the certified measurements to certify unbounded amount of randomness from the untrusted devices. 
 
 A scheme proposed in \cite{remik17} also utilises sequential measurements for generating unbounded randomness. However, it is based on violation of Bell inequalities which is again difficult to observe. % and also requires  entanglement which is a costly resource. 
 Interestingly, the scheme presented in this work is independent of the initial quantum state and thus even classical noise can be used to generate unbounded randomness. To the best of our knowledge, this is the first scheme that can be used to generate unbounded randomness in a state-independent way. Further on, violation of LG inequalities have been observed in a large number of quantum systems \cite{LGexp1, LGexp2, LGexp3,LGexp4,LGexp5}, thus making our scheme an excellent candidate for practical random number generators.

{\it{Sequential scenario---}}\label{sec2}
The sequential scenario consists of a source and a measurement device with $n-$inputs labeled as $x=1,2,\ldots,n$ and binary outcomes labeled as $a=0,1$. %Now, the measurement device feeds onto itself and thus can be considered as a "self-sequential" scenario. 
Now in a single run of the experiment, the user provides an arbitrary number of inputs in a sequential manner (one after another) to the device and records their outcomes. From the experiment one can obtain the distribution $\vec{p}_N=\{p(a_1,a_2,\ldots,a_N|x_1,x_2,\ldots,x_N)\}$ where $N$ is the number of consecutive inputs and $p(a_1,a_2,\ldots,a_N|x_1,x_2,\ldots,x_N)$ signifies the probability of obtaining outcomes $a_1,a_2,\ldots,a_N$ consequetively when one inputs $x_1,x_2,\ldots,x_N$ to the device [see Fig. \ref{fig1}]. 

Using the above set-up Leggett and Garg proposed a test referred to as "Leggett-Garg (LG)" inequality that allows one to exclude macrorealist non-invasive description of quantum theory [for detailed analysis refer to \cite{LGreview}]. The LG inequality is given by
\begin{eqnarray}\label{lg1}
    \mathcal{L}&=&\sum_{x=1}^{n-1}C_{x,x+1}-C_{n,1}\leq \beta_{\mathcal{M}}(n)
\end{eqnarray}
where the terms $C_{x,y}$ represent the two-time correlation between the inputs $x,y$ and can be obtained via $\vec{p}_2$ as
\begin{eqnarray}
    C_{x,y}=\sum_{a_1,a_2}(-1)^{a_1+a_2}p(a_1,a_2|x,y).
\end{eqnarray}
The above correlation can be generalized to an arbitrary number of sequential measurements $C_{x_1,x_2,\ldots,x_N}$ as
\begin{equation}\label{gencor}
   C_{x_1,\ldots,x_N}=\sum_{a_1,\ldots,a_N}(-1)^{a_1+\ldots+a_N}p(a_1,\ldots,a_N|x_1,\ldots,x_N).
\end{equation}
In the inequality \eqref{lg1}, $\beta_{\mathcal{M}}(n)$ denotes the maximum value that one can achieve when the distribution $\vec{p}_2$ can be expressed via "time-local" or "memory-less" hidden variable models given as
\begin{eqnarray}
p(a_1,a_2|x,y)=\sum_{\lambda}p(a_1|x,\lambda)p(a_2|y,\lambda)p(\lambda). 
\end{eqnarray}
with the value $\beta_{\mathcal{M}}(n)=n-2$.

\begin{figure}
    \centering
    \includegraphics[scale=.8]{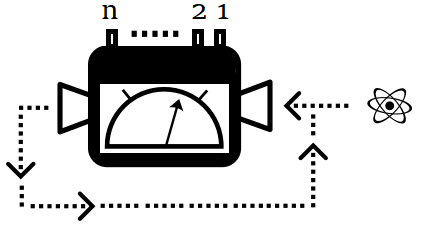}
    \caption{The sequential scenario. The source sends a single system into the measurement device with $n$ inputs labelled as $x_i=1,2,\ldots,n$ and binary outcomes labelled as $a_i=0,1$ with $i=1,\ldots,N$ denoting the sequence of measurements. The quantum state is measured in sequential way to obtain the probability distribution $\vec{p}_N$.}
    \label{fig1}
\end{figure}

Let us now restrict ourselves to quantum theory where each input $i$ corresponds to a fixed measurement $A_x=\{\mathbbm{M}_{x,0},\mathbbm{M}_{x,1}\}$ where $\mathbbm{M}_{x,j}$ represent measurement elements that are positive and $\sum_j\mathbbm{M}_{x,j}=\I$. The measurement elements in general are not projective. Consequently, the corresponding probability $p(a_1,a_2|A_1,A_2)$ is given by
\begin{equation}\label{probgen}
 p(a_1,a_2|A_1,A_2)=
    \Tr\left(\sqrt{\mathbbm{M}_{1,a_1}}\ U_{a_1}^{\dagger}\mathbbm{M}_{2,a_2}U_{a_1}\sqrt{\mathbbm{M}_{1,a_1}}\ \rho_A\right)
\end{equation}
where $U_{a_1}$ is some unitary dependent on the outcome $a_1$ and $\rho_A$ is some quantum state. The above rule to compute probability can be straightaway generalised to an arbitrary number of sequential measurements.

 Let us now consider that the measurements $A_i$ corresponding to each input $i$ are projective. As pointed out by Fritz in \cite{Fritz_2010} for projective measurements, the correlation $C_{x,y}$ in quantum theory is expressed as $C_{x,y}=1/2\left\langle\{\mathcal{A}_x,\mathcal{A}_y\}\right\rangle$ where $\langle O\rangle=\Tr(O\rho)$ for some operator $O$ and $\mathcal{A}_x$ denotes the observable corresponding to the $x-th$ measurement represented in terms of the measurement elements $\Pi_{x,j}\ (j=0,1)$ as 
\begin{eqnarray}
    \mathcal{A}_x=\Pi_{x,0}-\Pi_{x,1}.
\end{eqnarray}
%The measurement elements are Hermitian and sum up to identity. 
It is simple to observe that  $\mathcal{A}_x^2=\I$. Consequently, $p(a_1,a_2,\ldots,a_N|x_1,x_2,\ldots,x_N)$ is expressed for projective measurements as
\begin{eqnarray}
p(a_1,a_2,\ldots,a_N|x_1,x_2,\ldots,x_N)=\qquad\qquad\qquad\qquad\nonumber\\
    \Tr\left(\Pi_{x_{1},a_{1}}\ldots\Pi_{x_{N-1},a_{N-1}}\Pi_{x_N,a_N}\Pi_{x_{N-1},a_{N-1}}\ldots\Pi_{x_{1},a_{1}}\rho\right)\nonumber\\
\end{eqnarray}
Thus, for projective measurements in quantum theory the witness $ \mathcal{L}$ from \eqref{lg1} is given by
\begin{eqnarray}\label{lg2}
     \mathcal{L}&=&\frac{1}{2}\sum_{x=1}^{n-1}\left\langle\{\mathcal{A}_x,\mathcal{A}_{x+1}\}\right\rangle-\frac{1}{2}\left\langle\{\mathcal{A}_n,\mathcal{A}_1\}\right\rangle.
\end{eqnarray}
Consider now the following observables
\begin{eqnarray}\label{obs1}
    \tilde{\mathcal{A}}_x=\cos{\frac{\pi (x-1)}{n}}\sigma_z+\sin{\frac{\pi (x-1)}{n}}\sigma_x
\end{eqnarray}
where $\sigma_z, \sigma_x$ are the Pauli $z,x$ matrices. Now, a simple computation of the functional \eqref{lg2} using the observables \eqref{obs1} yields the value $\beta_Q(n)=n\cos{\frac{\pi}{n}}$ which is strictly greater than $\beta_{\mathcal{M}}(n)$. We will show later that $\beta_Q(n)$ is in fact the maximum value of $\mathcal{L}$ attainable using quantum theory when restricting to projective measurements. 

Before proceeding, let us recall an important constraint that is imposed on the distribution $\vec{p}_N$ known as "no-signalling in time" \cite{LGreview} conditions given by
\begin{eqnarray}\label{NSIT}
    \sum_{\substack{i=1\\i\ne k}}^N\sum_{a_i=0,1}p(a_1,a_2,\ldots,a_N|x_1,x_2,\ldots,x_N)=p(a_k|x_k)
\end{eqnarray}
for any $x_1,\ldots,x_N$. Before proceeding to the main results, let us now comment on whether the above-described sequential scenario can be utilised for device-independent quantum information or not.

%Interestingly, this allows one to generate an arbitrarily long sequence of inputs $\vec{y}=\{y_1,y_2,\ldots,y_N\}$ and outputs $\Vec{b}=\{b_1,b_2,\ldots,b_N\}$ where $y_i/b_i$ is the input/output at time $t_i$. From this set of inputs/outputs one can infer any joint probabilities of two different of In this work, we restrict ourselves to two-time correlations, The simplest Legett-Garg inequality is constructed out of two-time correlations, that is, probabilities of two immediate events $p()$

{\it{Self-testing quantum measurements in a state-independent way---}}
Self-testing is a method of DI certification where one can characterize the quantum states and measurements inside an untrusted device up to some degree of freedom under which the observed probabilities remain invariant. In this section, we self-test any qubit measurement in the $X-Z$ plane. To begin with, let us clearly state the major assumption that is imposed in the sequential scenario for obtaining the self-testing result.

\begin{ass}[Input-consistent measurements]\label{ass1} The correlations $\vec{p}_N$ obtained in the sequential scenario [see Fig. \ref{fig1}] are generated by 
measurements acting on some state that are consistent for a particular input. 
\end{ass}

The consistency of measurements for a particular input ensures that they are independent of any previous input-output. This allows us to consider that $A_i's$ are POVM's as discussed in Eq. \eqref{probgen}. Let us now revisit the previous experiment [see Fig. \ref{fig1}] in which a user sequentially measures a quantum state $\rho_A$ sent by the source and observes the correlations $\vec{p}_N$. Consider now a reference experiment that reproduces the same statistics as the actual experiment but involves the states $\tilde{\rho}_{A}$ and observables represented by $\tilde{\mathcal{A}}_i$. The observables $\mathcal{A}_i$ are self-tested from $\vec{p}_N$ if there exists a unitary $\mathcal{U}:\mathcal{H}_A\to \mathcal{H}_{A'}\otimes\mathcal{H}_{A''}$ such that 
\begin{equation}
\mathcal{U}\mathcal{A}_i\mathcal{U}^{\dagger}=\tilde{\mathcal{A}}_i\otimes\mathbbm{1}_{A''},
\end{equation}
%and,
%\begin{equation}
%\mathcal{U}\rho_{A}\mathcal{U}^{\dagger}=\tilde{\rho}_{A'}\otimes\rho_{A''},
%\end{equation}
%
%
where $\mathcal{H}_{A''}$ denotes the junk Hilbert space and $\mathbbm{1}_{A''}$ denotes the identity acting on $\mathcal{H}_{A''}$. The self-testing result presented in this work is state-independent and consequently no state can be certified using our scheme. Before proceeding, let us recall that the observables can be certified on the support of the quantum state. Thus without loss of generality throughout the manuscript, we will assume that the quantum state $\rho_A$ is full-rank.

Let us now restrict ourselves to the probability distribution $\vec{p}_2$. Inspired by \cite{Das}, we impose the following condition on $\vec{p}_2$.
\begin{defn}[Zeno conditions]\label{def1} If the same measurement $A_i$ for any $i$ is performed sequentially, then for both measurement events the same outcome occurs with certainty.  This implies that the distribution $\vec{p}_2$ is constrained as
\begin{eqnarray}\label{zenocond}
    p(a,b|A_i,A_i)=\delta_{a,b}p(a|A_i) \qquad \forall a,b,i.
\end{eqnarray}
    
\end{defn}

Let us note that the above condition is operational and one can verify it from the statistics generated in the experiment by successively performing the same measurement.
Using assumption \ref{ass1}, we show in fact 1 in Appendix A of \cite{SupMat}, that the condition \eqref{zenocond} implies that the measurements $A_i$ are projective. This allows us to consider the Leggett-Garg functional \eqref{lg2}. %such that one observes the value $\beta_Q(n)$. 
Let us show that $\beta_Q(n)$ is the maximal quantum value of $\mathcal{L}$ \eqref{lg2}. For this purpose, we consider the  LG operator $\hat{\mathcal{L}}$ given by
\begin{eqnarray}
    \hat{\mathcal{L}}&=&\frac{1}{2}\sum_{x=1}^{n-1}\{\mathcal{A}_x,\mathcal{A}_{x+1}\}-\frac{1}{2}\{\mathcal{A}_n,\mathcal{A}_1\}.
\end{eqnarray}
Consider now the following operators $P_i$ for $i=1,\ldots,n-2$ given by
\begin{eqnarray}\label{P}
    P_i=\mathcal{A}_i-\alpha_ i \mathcal{A}_{i+1}+\beta_ i \mathcal{A}_{n}
\end{eqnarray}
where 
\begin{eqnarray}\label{alpha}
    \alpha_i=\frac{\sin\left(\frac{\pi i}{n}\right)}{\sin\left(\frac{\pi (i+1)}{n}\right)},\qquad \beta_i=\frac{\sin\left(\frac{\pi}{n}\right)}{\sin\left(\frac{\pi (i+1)}{n}\right)}.
\end{eqnarray}
After some simplification, one can observe that
\begin{equation}\label{sos1}
    \sum_{i=1}^{n-2}\frac{1}{2\alpha_i}P_i^{\dagger}P_i=\frac{1}{2}\sum_{i=1}^{n-2}\left(\frac{1}{\alpha_i}+\alpha_i+\frac{\beta_i^2}{\alpha_i}\right)\I-\hat{\mathcal{L}}
\end{equation}
where we used the fact that $\mathcal{A}_i^2=\I$.
Notice that the term on the left-hand side of the above formula is positive which allows us to conclude that
\begin{eqnarray}\label{sos2}
   \hat{\mathcal{L}}\leq \frac{1}{2}\sum_{i=1}^{n-2}\left(\frac{1}{\alpha_i}+\alpha_i+\frac{\beta_i^2}{\alpha_i}\right)\I
\end{eqnarray}
In Fact 2 in the Appendix B of \cite{SupMat}, we show that
\begin{eqnarray}
    \sum_{i=1}^{n-2}\left(\frac{1}{\alpha_i}+\alpha_i+\frac{\beta_i^2}{\alpha_i}\right)=2\beta_Q(n)
\end{eqnarray}
which allows us to infer from \eqref{sos1} that 
\begin{eqnarray}
    \hat{\mathcal{L}}\leq\beta_Q(n)\I.
\end{eqnarray}
Consequently, $ \beta_Q(n)$ is the maximal quantum value of $\mathcal{L}$ \eqref{lg2}. 

Now, let us assume that one observes the value $ \beta_Q(n)$ of the LG functional $\mathcal{L}$ \eqref{lg2}. Thus from the decomposition \eqref{sos1}, we have that 
\begin{eqnarray}\label{sosrel2}
    \Tr(P_i^{\dagger}P_i\rho_A)=0, \qquad i=1,\ldots,n-2.
\end{eqnarray}
The above relation \eqref{sosrel2} will be particularly useful for self-testing as stated below.

\setcounter{thm}{0}
\begin{thm}\label{Theo1M} 
Assume that the Zeno conditions \eqref{zenocond} are satisfied and the LG inequality \eqref{lg1} is maximally violated by some state $\rho_A$ and observables $\mathcal{A}_i\ (i=1,\ldots,n)$. Then, the following statements hold true:
\\
\\
1. The observables $\mathcal{A}_i$ act on the Hilbert space  $\mathcal{H}_{A}=(\mathbbm{C}^2)_{A'}\otimes \mathcal{H}_{A''}$ for some auxiliary Hilbert space $\mathcal{H}_{A''}$.\\
\\
2.  \ \  There exist a unitary transformation, $\mathcal{U}:\mathcal{H}_A\rightarrow\mathcal{H}_A$,  such that
\begin{eqnarray}\label{lem1.2}
\mathcal{U}\mathcal{A}_i\mathcal{U}^{\dagger}=\tilde{\mathcal{A}}_i\otimes\mathbbm{1}_{A''}.
\end{eqnarray}
where the observables $\tilde{\mathcal{A}}_i$ are listed in Eq. \eqref{obs1}.
\end{thm}

The proof of the above theorem is given in Appendix C of \cite{SupMat}.
Interestingly, the above self-testing result is valid for any quantum state. Just like any other self-testing scheme, we can always consider that the input state is full-rank. This is because any correlation that one obtains in an experiment is only via some measurements acting on the support of the state. So every measurement can only be certified only on the support of the state and thus it is equivalent to assuming that the input state is full-rank. 

From a practical perspective, one can never exactly prepare the measurements to obtain the exact maximal value of the LG inequality \eqref{lg2}. Assuming that one can prepare projective measurements and thus satisfy the Zeno conditions def \ref{def1}, we find the violation of the LG inequality \eqref{lg2} to be robust as stated below.

\begin{thm}
    Suppose that the observables in the actual experiment are close to the ideal ones as
    \begin{eqnarray}\label{obserr}
    ||(\mathcal{A}_i-\mathcal{A}'_i)\sqrt{\rho_A}||\leq \varepsilon
    \end{eqnarray}
where $\mathcal{A}'_i=\mathcal{U}\left(\tilde{\mathcal{A}}_i\otimes\I\right) \mathcal{U}^{\dagger}$ and $\tilde{\mathcal{A}}_i$ are listed in Eq. \eqref{obs1}. Here $\rho_A$ is the actual state during the experiment. Then, the LG inequality \eqref{lg2} is violated close to the quantum bound as
\begin{eqnarray}
    \mathcal{L}\geq \beta_Q(n)-\frac{n(1+2\cos(\pi/n))}{2}\varepsilon.
\end{eqnarray}
\end{thm}

The proof of the above theorem can be found in Appendix D  of \cite{SupMat}.

Let us now utilize the above self-testing result in the noiseless scenario to certify unbounded amount of randomness generated from the untrusted measurements.

{\it{State-independent unbounded randomness expansion---}}
Here we certify unbounded randomness from the untrusted measurements in the sequential scenario. For this purpose, we first consider assumption \ref{ass1} along with the Zeno conditions \eqref{zenocond} which ensures that the measurements are projective.
Let us now restrict to even $n$ and consider the correlation $C_{i,i+n/2,i,i+n/2,\ldots}$ for any $i$ such that $(i=2,\ldots,\frac{n}{2})$ corresponding to the distribution when the observables $\mathcal{A}_i,\mathcal{A}_{i+n/2}$ are sequentially measured. In terms of probabilities, the correlation $C_{i,i+n/2,i,i+n/2,\ldots}$ is expressed in Eq. \eqref{gencor}. Consequently, we modify the LG inequality as
\begin{eqnarray}\label{R}
    \mathcal{R}_i=\mathcal{L}-|C_{i,i+n/2,i,i+n/2,\ldots}| \quad i=2,\ldots,\frac{n}{2}.
\end{eqnarray}
Notice that using the observables listed in \eqref{obs1}, one can attain the value $\beta_Q(n)=n\cos(\frac{\pi}{n})$ of $\mathcal{R}_i$ for any $i$. As $\mathcal{R}_i\leq\mathcal{L}$, it is thus clear that the maximum quantum value of $\mathcal{R}_i$ is the same as $\mathcal{L}$. Now, if one observes the maximal quantum value $\beta_Q(n)$ of $\mathcal{R}_i$, then $|C_{i,i+n/2,i,i+n/2,\ldots}|=0$ and $\mathcal{L}=\beta_Q(n)$. Thus, from theorem \ref{Theo1M}, we can conclude that the observables $\mathcal{A}_i$ are certified as in \eqref{lem1.2}. 

Now, let us compute the guessing probability of an adversary Eve who might have access to the user's quantum state. The joint state of Eve and the user is denoted as $\rho_{AE}$ such that $\rho_A=\Tr_E(\rho_{AE})$. As Eve's dimension is unrestricted, without loss to generality we assume that $\rho_{AE}$ is pure and denote it further as $\psi_{AE}$. To guess the user's outcome, she could then perform some measurement $\mathbbm{Z}=\{Z_e\}$, where $e$ denotes the outcome of Eve, on her part of the joint quantum state $\psi_{AE}$. The probability of Eve obtaining an outcome $e=a$ given the user's outcome $a$ is denoted as $p(e=a|a,\mathbbm{Z})$. Since Eve does not have access to the outcome $a$, the guessing probability of Eve is averaged over the outcomes of the user giving us the following expression
\begin{eqnarray}\label{guess1}
     p_{guess}(E|S)=\max_\mathbbm{Z}\sum_{\mathbf{a}}p(a)p(e=a|a,\mathbbm{Z})
\end{eqnarray}
where $S$ denotes the system of the user and $\mathbf{a}=a_1,a_2,\ldots,a_N$. For a note, the above formula is inspired from randomness generation in the Bell scenario \cite{random0}. Now, expressing \eqref{guess1} in quantum theory, we obtain that
\begin{equation}\label{guess3}
    p_{guess}(E|S)=\max_\mathbbm{Z}\sum_{\mathbf{a}}\Tr\left(\Pi_{x_{1},a_{1}}\Pi_{x',a'}\Pi_{x_{1},a_{1}}\otimes Z_{\mathbf{a}}\psi_{AE}\right)
\end{equation}
where
\begin{equation}
   \Pi_{x',a'}= \Pi_{x_{2},a_{2}}\ldots\Pi_{x_{N-1},a_{N-1}}\Pi_{x_N,a_N}\Pi_{x_{N-1},a_{N-1}}\ldots\Pi_{x_{2},a_{2}}.
\end{equation}
The projectors $\Pi_{x_i,a_i}$ are certified from  Eq. \eqref{lem1.2}  as $\Pi_{x_i,a_i}=\mathcal{U}^{\dagger}\left(\proj{e_{x_i,a_i}}\otimes\I\right)\mathcal{U}$, where $\ket{e_{x_i,a_i}}$ are the eigenstates of $\tilde{\mathcal{A}}_{x_i}$ [see Eq. \eqref{obs1}]. Thus, the guessing probability from Eq. \eqref{guess3} can be simplified to
\begin{eqnarray}\label{guess3}
     p_{guess}(E|S)=\qquad\qquad\qquad\qquad\qquad\qquad\qquad\nonumber\\ \max_\mathbbm{Z}\sum_{\mathbf{a}}\mathcal{N}_{\mathbf{a}}\ \Tr\left(\proj{e_{x_1,a_i}}\otimes\I_{A''}\otimes Z_{\mathbf{a}}\psi'_{AE}\right)
\end{eqnarray}
where $\psi'_{AE}=\mathcal{U}\psi'_{AE}\mathcal{U}^{\dagger}$ with
\begin{eqnarray}
\mathcal{N}_{\mathbf{a}}=\prod_{l=1}^{N-1}|\langle{e_{x_l,a_i}}\ket{e_{x_{l+1},a_i}}|^2.
\end{eqnarray}
Now, choosing $x_1=2,x_2=2+n/2,x_3=2,x_4=2+n/2\ldots$ we obtain that $\mathcal{N}_{a}=\frac{1}{2^{N-1}}$ for any $a$. Thus, the expression \eqref{guess3} is further simplified to
\begin{eqnarray}\label{guess4}
     p_{guess}(E|S)=\qquad\qquad\qquad\qquad\qquad\qquad\qquad\nonumber\\\frac{1}{2^{N-1}} \max_\mathbbm{Z}\sum_{\mathbf{a}}\ \Tr\left(\proj{e_{x_1,a_i}}\otimes\I_{A''}\otimes Z_{\mathbf{a}}\psi'_{AE}\right)
\end{eqnarray}
As the observable $\mathcal{A}_{x_1}$ acts on $\mathbbm{C}^2\otimes\mathcal{H}_{A''}$, we express the state $\ket{\psi_{AE}}$ as
\begin{eqnarray}\label{genstateeve}
\ket{\psi_{AE}'}=\sum_{i=0,1}\lambda_{a_i}\ket{e_{x_1,a_i}}_{A'}\ket{f_{i}}_{A''E}.
\end{eqnarray}
such that $\sum_{i=0,1}\lambda_{a_i}^2=1$ and the states $\ket{f_{i}}_{A''E}$ are in general not othogonal.
Plugging the above state Eq. \eqref{genstateeve} into Eq. \eqref{guess4} gives us
\begin{eqnarray}
    p_{guess}(E|S)=\frac{1}{2^{N-1}} \max_\mathbbm{Z}\sum_{\mathbf{a}}\lambda_{a_i}^2\ \bra{f_i}\I_{A''}\otimes Z_{\mathbf{a}}\ket{f_i}.
\end{eqnarray}
Using the fact that $\sum_{\mathbf{a}}Z_{\mathbf{a}}=\I$, we finally obtain  that
\begin{eqnarray}
     p_{guess}(E|S)=\frac{1}{2^{N-1}}.
\end{eqnarray}
The amount of randomness that can be extracted is quantified by the min-entropy of Eve's guessing probability \cite{di4}. Consequently, we obtain $N-1$ bits of randomness from $N-$sequential measurements. 
In principle, $N$ can be arbitrarily large and thus we can obtain an unbounded amount of randomness. Let us stress here that one can also obtain unbounded randomness when $n$ is odd. However, the amount of randomness obtained with $N-$sequential measurements is lower when $n$ is odd than even. It is also important to note here that one needs to input $2\log_2n$ bits of randomness in the scheme for the LG test. So in the proposed scheme, the first two measurements of the $N-$sequence need to be freely chosen. After this, it is not required as even if Eve knows the inputs she can not guess the outcomes. 

Let us notice that in the above protocol of randomness certification, we only considered the LG scenario with an even number of measurements. However, it can also be straightaway extended to the scenario with an odd number of measurements. However, in that case, one would obtain less than $N-1$ bits from $N$ sequential measurements. The reason is that the post-measured states corresponding to any measurements in the odd LG scenario would not give completely random outputs for any of the certified measurements. Consequently, for each of Alice's inputs, Eve can guess the outcomes with more than $1/2$ probability but strictly less than $1$.

Analysing from a phenomenological perspective, even if Eve has maliciously prepared an entangled source such that it sends a part of the state to her, the first projective measurement will break the entanglement and then Eve would have no connection with the state of Alice. Consequently, even if Eve knows the inputs or the measurements of Alice she can not guess the outcomes as there are no shared resources between her and Alice. This is why Eve can perfectly guess the first measurement outcome in the sequence but cannot guess any more of the outcomes in sequence with more than $1/2$ probability. Consequently, we obtain $N-1$ bits of secure randomness from $N$ sequential measurements.

{\it{Discussions---}}
%Quantum protocols where one does not require to trust the devices are termed device-independent. These are particularly useful in quantum cryptography and key distribution as these protocols provide the highest security but as discussed above they are not practically feasible. 
In the scenario considered in this work, all the operations of the device occur locally where the device might have access to the previous inputs and outputs. %Consequently, even a classical pre-programmed device can simulate any distribution $\vec{p}_N$ obtained in the sequential scenario and still satisfy the "no-signalling" in time constraint [Eq. \eqref{NSIT}]. 
For instance, the device might already have a list of instructions conditioned on the previous input and output in a stochastic way and build up the observed statistics. This possibility can never be excluded unless one finds some physical constraint such that the device does not store the information of the previous input and output. In the device-independent scenario, this possibility is excluded due to the space-like separation that does not allow one side to gain information about the other side. %This has also been noted in several certification schemes based on sequential sequential scenarios. For instance, in \cite{Das} a major assumption was that the correlations are generated from the measurements acting on a known quantum state. 
Consequently, as discussed above, we consider the assumption of "input-consistent measurements" \ref{ass1} which allows us to exclude the possibility of a classically pre-programmed device. Let us stress that such an assumption is natural in space-like separated scenarios but is an enforced assumption for the time-like separated scenario considered in this work. However, apart from device-independent ones, in every other quantum experiment, one naturally assumes that the correlations are generated by some measurement acting on some state and these measurements remain the same throughout the experiment. As pointed out by the referee, a few semi-device-independent schemes are also able to close this loophole \cite{Zhang2021,Nie_2024}.

Compared to semi-device independent randomness generation, our protocol is more secure as the assumption of "input-consistent measurements" is more natural than considering trusted measurements (source-independent scenario) \cite{SI1,SI2,SI3,SI4} or the dimension (prepare and measure scenario) \cite{PMrand1, PMrand2,PMrand3,PMrand4}. It is clear that trusting measurements is much stronger than assuming that the measurements remain consistent throughout the experiment. Trusting dimension, although weaker than trusting measurements, might allow an adversary to generate fake randomness by coupling an additional system with the input states that remain hidden from the user. Most importantly, our scheme can be implemented by using just some noise in the system, unlike any other known randomness generation scheme, where one needs to prepare specific states. In Appendix E of \cite{SupMat}, we also provide a possible protocol that can be easily implemented. As the source can in principle be any noise, one can even utilise microwave background radiation to generate this randomness.

Several follow-up problems arise from our work. %The assumption of measurements being consistent is strict and thus it would be highly desirable to extend the framework of LG inequalities to case when the measurements might depend on previous input-output. 
An interesting problem would be to find the robustness of our protocol towards experimental imperfections. Further on, it would be highly desirable to generalise the above scheme to arbitrary number of outcomes to generate an arbitrary amount of randomness from a single measurement in a state-independent way. It would also be highly desirable if one can self-test any qubit measurement in a single experiment using the above scheme. %thus providing a state-independent certification of tomographically complete set of measurements. 

\begin{acknowledgements}
We would like to thank Stefano Pironio for useful insights. This project was funded within the QuantERA II Programme (VERIqTAS project) that has received funding from the European Union’s Horizon 2020 research and innovation programme under Grant Agreement No 101017733.
    
\end{acknowledgements}

%\bibliography{ref.bib}
%apsrev4-2.bst 2019-01-14 (MD) hand-edited version of apsrev4-1.bst
%Control: key (0)
%Control: author (8) initials jnrlst
%Control: editor formatted (1) identically to author
%Control: production of article title (0) allowed
%Control: page (0) single
%Control: year (1) truncated
%Control: production of eprint (0) enabled
\providecommand{\noopsort}[1]{}\providecommand{\singleletter}[1]{#1}%

\onecolumngrid
\appendix
\section{Projectivity of quantum measurements}
%The Zeno conditions are defined as
%\begin{defn}[Zeno conditions] If the same measurement $A_i$ for any $i$ is performed sequentially, then for both measurement events the same outcome occurs with certainty.  This implies that the distribution $\vec{p}_2$ is constrained as
%\begin{eqnarray}\label{zenocond}
 %   p(a,b|A_i,A_i)=\delta_{a,b}p(a|A_i) \qquad \forall a,b,i.
%\end{eqnarray}

%\end{defn*}
%Then in quantum theory, we have the following result.

\begin{fakt}\label{fact1}
    Assume that in the sequential scenario depicted in Fig. 1 of the manuscript, the correlations $\vec{p}_2$ are generated via input-consistent measurements $A_i$ acting on some quantum state $\rho_A$ [see assumption 1 of the manuscript]. Then the Zeno conditions \eqref{zenocond} implies that the measurements $A_i$ are projective.
\end{fakt}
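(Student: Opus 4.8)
The plan is to substitute the ``same measurement performed twice'' case into the sequential update rule \eqref{probgen} and then extract the consequences of the Zeno constraint \eqref{zenocond}. Assumption~\ref{ass1} is exactly what licenses this: input-consistency forces the device to implement the \emph{same} POVM $\{\mathbbm{M}_{i,0},\mathbbm{M}_{i,1}\}$ on the second round, irrespective of the first outcome, so that for some outcome-dependent unitary $U_a$
\begin{equation}
p(a,b|A_i,A_i)=\Tr\left(\sqrt{\mathbbm{M}_{i,a}}\,U_a^{\dagger}\,\mathbbm{M}_{i,b}\,U_a\,\sqrt{\mathbbm{M}_{i,a}}\,\rho_A\right),
\end{equation}
the second element carrying the outcome $b$ of the second round.

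First I would use the off-diagonal part of \eqref{zenocond}, namely $p(a,b|A_i,A_i)=0$ whenever $b\neq a$ (for binary outcomes, $b=1-a$). Writing $X:=\sqrt{\mathbbm{M}_{i,b}}\,U_a\,\sqrt{\mathbbm{M}_{i,a}}$, the operator inside the trace is $X^{\dagger}X\ge 0$, so $\Tr(X^{\dagger}X\rho_A)=0$. Since $\rho_A$ may be taken full-rank, a positive semidefinite operator with vanishing trace against $\rho_A$ is itself zero; hence $X=0$, that is,
\begin{equation}
\sqrt{\I-\mathbbm{M}_{i,a}}\;U_a\,\sqrt{\mathbbm{M}_{i,a}}=0 .
\end{equation}
(The diagonal conditions $p(a,a|A_i,A_i)=p(a|A_i)$ then follow automatically by summing the first display over $b$ and using $\sum_b\mathbbm{M}_{i,b}=\I$, so only the off-diagonal Zeno conditions are actually needed.)

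The final step is a rank count. The identity $X=0$ says that the range of $U_a\sqrt{\mathbbm{M}_{i,a}}$ is contained in $\ker\!\big(\sqrt{\I-\mathbbm{M}_{i,a}}\big)=\ker(\I-\mathbbm{M}_{i,a})$, i.e.\ in the eigenvalue-$1$ eigenspace $V_1$ of $\mathbbm{M}_{i,a}$. Because $U_a$ is unitary and $\sqrt{\mathbbm{M}_{i,a}}$ has the same range as $\mathbbm{M}_{i,a}$, that range has dimension $\mathrm{rank}\,\mathbbm{M}_{i,a}$; and since $V_1\subseteq\mathrm{ran}\,\mathbbm{M}_{i,a}$ we get $\mathrm{rank}\,\mathbbm{M}_{i,a}\le\dim V_1\le\mathrm{rank}\,\mathbbm{M}_{i,a}$, so $V_1=\mathrm{ran}\,\mathbbm{M}_{i,a}$. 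Thus $\mathbbm{M}_{i,a}$ acts as the identity on its own range, i.e.\ $\mathbbm{M}_{i,a}^{2}=\mathbbm{M}_{i,a}$; consequently $\mathbbm{M}_{i,1-a}=\I-\mathbbm{M}_{i,a}$ is a projector too, so $A_i$ is projective. As $i$ is arbitrary, this proves the Fact. (For more than two outcomes the same argument works after replacing $V_1$ by $\bigcap_{b\neq a}\ker\mathbbm{M}_{i,b}$.)

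I expect the rank count to be the only genuinely delicate point: one must not try to read off projectivity ``by inspection'' from $\sqrt{\I-\mathbbm{M}_{i,a}}\,U_a\sqrt{\mathbbm{M}_{i,a}}=0$, because the unknown unitary $U_a$ does not \textit{a priori} forbid $\mathbbm{M}_{i,a}$ from having eigenvalues strictly between $0$ and $1$ --- it is precisely the unitary-invariance of the rank that excludes this. The remaining steps --- passing from the scalar trace identity to $X=0$, and disposing of the diagonal Zeno conditions --- are routine once full-rankness of $\rho_A$ is used.
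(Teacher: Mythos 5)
Your proof is correct and, up to the key operator identity, it is the paper's: you plug the repeated input into the update rule \eqref{probgen}, set $X=\sqrt{\mathbbm{M}_{i,1-a}}\,U_a\sqrt{\mathbbm{M}_{i,a}}$, and use positivity of $X^{\dagger}X$ together with full-rankness of $\rho_A$ to conclude $X=0$, exactly as in the appendix (your remark that the diagonal Zeno conditions then follow from normalization is a small economy the paper does not make). From there the two arguments diverge. The paper multiplies further to obtain $U_a\mathbbm{M}_{i,a}=\mathbbm{M}_{i,a}U_a\mathbbm{M}_{i,a}$, inserts the eigendecomposition $\mathbbm{M}_{i,a}=\sum_k\lambda_{k,a}\proj{e_{k,a}}$, and sandwiches with eigenvectors to force every nonzero eigenvalue to equal $1$, the pivotal claim being that for each $l$ some $\langle e_{l,a}\ket{f_{k,a}}\neq 0$. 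You instead read $X=0$ as the inclusion $\mathrm{ran}\big(U_a\sqrt{\mathbbm{M}_{i,a}}\big)\subseteq\ker(\I-\mathbbm{M}_{i,a})=V_1\subseteq\mathrm{ran}\,\mathbbm{M}_{i,a}$ and close with a rank count giving $V_1=\mathrm{ran}\,\mathbbm{M}_{i,a}$, hence $\mathbbm{M}_{i,a}^2=\mathbbm{M}_{i,a}$. Your ending is shorter, and in fact your dimension count is precisely what substantiates the paper's ``for each $l$ there exists such a $k$'' step: the $\ket{f_{k,a}}$ span $U_a(\mathrm{ran}\,\mathbbm{M}_{i,a})$, and only the equality $V_1=\mathrm{ran}\,\mathbbm{M}_{i,a}$ guarantees that $\ket{e_{l,a}}$ cannot be orthogonal to all of them; so your route arguably makes the paper's argument airtight rather than merely paralleling it. One caveat, which you rightly flag as the delicate point: the rank comparison is conclusive only for finite rank. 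In an infinite-dimensional support, nested closed subspaces of equal (infinite) dimension need not coincide, and one can in fact construct a non-projective effect with an eigenvalue $1/2$ and shift-like unitaries satisfying both off-diagonal Zeno identities on $\ell^2$, so finite-dimensionality (or finite rank) is genuinely used. This limitation, however, is shared by the paper's own eigenvalue-sandwich argument, so within the setting the paper implicitly works in, your proof stands.
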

\begin{proof}
    To begin with, let us expand the condition \eqref{zenocond} for $i=1$ using the L\"{u}der's rule to obtain the following expression
    \begin{equation}
         \Tr\left(\sqrt{\mathbbm{M}_{a}}\ U_{a}^{\dagger}\mathbbm{M}_{b}U_{a}\sqrt{\mathbbm{M}_{a}}\ \rho_A\right)= \delta_{a,b}\Tr\left(\mathbbm{M}_{a}\ \rho_A\right)
    \end{equation}
    where for simplicity we dropped the index $i=1$.
    Let us consider the case when $a\ne b$ in the above expression to obtain the following condition
    \begin{eqnarray}
        \Tr\left(\sqrt{\mathbbm{M}_{a}}\ U_{a}^{\dagger}\mathbbm{M}_{b}U_{a}\sqrt{\mathbbm{M}_{a}}\ \rho_A\right)=\qquad\nonumber\\||\sqrt{\mathbbm{M}_{b}}U_{a}\sqrt{\mathbbm{M}_{a}}\ \sqrt{\rho_A}||=0.
         \end{eqnarray}
    It is straightforward to conclude from the above expression that
    \begin{eqnarray}
 \sqrt{\mathbbm{M}_{b}}U_{a}\sqrt{\mathbbm{M}_{a}}\ \sqrt{\rho_A}=0
    \end{eqnarray}
    which on utilising the fact that $\rho_A$ is full-rank and thus invertible, we obtain
    \begin{eqnarray}\label{A3}
       \sqrt{\mathbbm{M}_{b}}U_{a}\sqrt{\mathbbm{M}_{a}}=0. 
    \end{eqnarray}
    Now multiplying $\sqrt{\mathbbm{M}_{b}}$ from left-hand side and $\sqrt{\mathbbm{M}_{a}}$ from right-hand side and using the fact that $\sum_a\mathbbm{M}_{a}=\I$, we obtain that
    \begin{eqnarray}\label{A4}
        U_a\mathbbm{M}_{a}=\mathbbm{M}_{a}U_a\mathbbm{M}_{a},\qquad a=0,1.
    \end{eqnarray}
    Let us now expand $\mathbbm{M}_{a}$ using its eigendecomposition as
    \begin{eqnarray}\label{ma}
        \mathbbm{M}_{a}=\sum_{k}\lambda_{k,a}\proj{e_{k,a}}
    \end{eqnarray}
    where $0\leq\lambda_{k,j}\leq 1$ and $\{\ket{e_{k,a}}\}_k$ are orthonormal set of vectors for any $a$. Let us also observe that $U_a\mathbbm{M}_{a}=\sum_{k}\lambda_{k,a}\ket{f_{k,a}}\!\bra{e_{k,a}}$ where $\ket{f_{k,a}}=U_a\ket{e_{k,a}}$. Consequently, we obtain from Eq. \eqref{A4} that 
    \begin{eqnarray}\label{A5}
       \sum_{k}\lambda_{k,a}\ \ket{f_{k,a}}\!\bra{e_{k,a}}=\sum_{l,k}\lambda_{l,a}\lambda_{k,a}\ \ket{e_{l,a}}\!\langle e_{l,a}\ket{f_{k,a}}\!\bra{e_{k,a}}.
    \end{eqnarray}
    Sandwiching the above expression with $\bra{e_{l,a}}..\ket{e_{k,a}}$ gives us
    \begin{eqnarray}\label{A6}
        \lambda_{k,a}\ \langle e_{l,a}\ket{f_{k,a}}=\lambda_{l,a}\lambda_{k,a}\ \langle e_{l,a}\ket{f_{k,a}}\qquad \forall l,k.
    \end{eqnarray}
    There exist atleast one $k$ for each $l$ such that $\langle e_{l,a}\ket{f_{k,a}}\ne0$ or else the condition Eq. \eqref{A5} can not be satisfied. Thus, we obtain from Eq. \eqref{A6} that $\lambda_{l,a}=1$ for all $l,a$. Thus, the measurement $\mathbbm{M}_a$ from Eq. \eqref{ma} is projective.
\end{proof}
\section{Some mathematical fact}
\begin{fakt}\label{fact1}
If $\alpha_i=\frac{\sin\left(\frac{\pi i}{n}\right)}{\sin\left(\frac{\pi (i+1)}{n}\right)}$ and $\beta_i=\frac{\sin\left(\frac{\pi}{n}\right)}{\sin\left(\frac{\pi (i+1)}{n}\right)}$, then 
\begin{eqnarray}\label{eq1}
    \sum_{i=1}^{n-2} \left(\frac{1}{\alpha_i}+\alpha_i+\frac{\beta_i^2}{\alpha_i}\right)=2n\cos\left(\frac{\pi}{n}\right).
\end{eqnarray}
\begin{proof}
    Let us first expand the term $t_i=\frac{1}{\alpha_i}+\alpha_i+\frac{\beta_i^2}{\alpha_i}$ for any $i$, 
    \begin{equation}\label{eq2}
        t_i=\frac{\sin\left(\frac{\pi (i+1)}{n}\right)}{\sin\left(\frac{\pi i}{n}\right)}+\frac{\sin\left(\frac{\pi i}{n}\right)}{\sin\left(\frac{\pi (i+1)}{n}\right)}+\frac{\sin^2\left(\frac{\pi}{n}\right)}{\sin\left(\frac{\pi (i+1)}{n}\right)\sin\left(\frac{\pi i}{n}\right)}.
    \end{equation}
Using the identity $\sin(a+b)=\sin(a)\cos(b)+\sin(b)\cos(a)$, we obtain from Eq. \eqref{eq2} that
\begin{eqnarray}\label{eq4}
t_i=2\cos\left(\frac{\pi}{n}\right)+\sin\left(\frac{\pi}{n}\right)\left[\cot\left(\frac{\pi i}{n}\right)-\cot\left(\frac{\pi(i+1)}{n}\right)\right]\nonumber\\ +\frac{\sin^2\left(\frac{\pi}{n}\right)}{\sin\left(\frac{\pi (i+1)}{n}\right)\sin\left(\frac{\pi i}{n}\right)}.\qquad
\end{eqnarray}
Now, expressing 
\begin{eqnarray}
    \sin\left(\frac{\pi}{n}\right)=\sin\left(\frac{\pi (i+1)}{n}-\frac{\pi i}{n}\right)
\end{eqnarray}
and again using the identity $\sin(a+b)=\sin(a)\cos(b)+\sin(b)\cos(a)$, we obtain from Eq. \eqref{eq4} 
\begin{equation}
    t_i=2\cos\left(\frac{\pi}{n}\right)+2\sin\left(\frac{\pi}{n}\right)\left[\cot\left(\frac{\pi i}{n}\right)-\cot\left(\frac{\pi(i+1)}{n}\right)\right]
\end{equation}
Now, summing $t_i$ over $i$ gives us
\begin{eqnarray}
    \sum_{i=1}^{n-2}t_i&=&2(n-2)\cos\left(\frac{\pi}{n}\right)+4\sin\left(\frac{\pi}{n}\right)\cot\left(\frac{\pi}{n}\right)\nonumber\\&=&2n\cos\left(\frac{\pi}{n}\right).
\end{eqnarray}
This completes the proof.
\end{proof}

\end{fakt}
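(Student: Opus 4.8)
The plan is to collapse the sum into a telescoping series using only elementary trigonometric identities. First I would write the summand out explicitly. Since $1/\alpha_i=\sin\left(\frac{\pi(i+1)}{n}\right)/\sin\left(\frac{\pi i}{n}\right)$, $\alpha_i=\sin\left(\frac{\pi i}{n}\right)/\sin\left(\frac{\pi(i+1)}{n}\right)$, and $\beta_i^2/\alpha_i=\sin^2\left(\frac{\pi}{n}\right)/\left[\sin\left(\frac{\pi i}{n}\right)\sin\left(\frac{\pi(i+1)}{n}\right)\right]$, the term $t_i:=\frac{1}{\alpha_i}+\alpha_i+\frac{\beta_i^2}{\alpha_i}$ is a sum of three ratios of sines over the common denominator $\sin\left(\frac{\pi i}{n}\right)\sin\left(\frac{\pi(i+1)}{n}\right)$. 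Throughout I would assume $n\geq 3$, so that the index range $\{1,\dots,n-2\}$ is nonempty and every sine in a denominator (namely $\sin\left(\frac{\pi i}{n}\right)$ for $i=1,\dots,n-1$) is strictly positive.

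Next I would apply the addition formula $\sin(a+b)=\sin a\cos b+\cos a\sin b$ in two guises. Writing $\sin\left(\frac{\pi(i+1)}{n}\right)=\sin\left(\frac{\pi i}{n}\right)\cos\left(\frac{\pi}{n}\right)+\cos\left(\frac{\pi i}{n}\right)\sin\left(\frac{\pi}{n}\right)$ turns the first ratio into $\cos\left(\frac{\pi}{n}\right)+\sin\left(\frac{\pi}{n}\right)\cot\left(\frac{\pi i}{n}\right)$; writing $\sin\left(\frac{\pi i}{n}\right)=\sin\left(\frac{\pi(i+1)}{n}\right)\cos\left(\frac{\pi}{n}\right)-\cos\left(\frac{\pi(i+1)}{n}\right)\sin\left(\frac{\pi}{n}\right)$ turns the second into $\cos\left(\frac{\pi}{n}\right)-\sin\left(\frac{\pi}{n}\right)\cot\left(\frac{\pi(i+1)}{n}\right)$. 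For the third term I would express $\sin\left(\frac{\pi}{n}\right)=\sin\left(\frac{\pi(i+1)}{n}-\frac{\pi i}{n}\right)$ and expand once more, obtaining $\sin^2\left(\frac{\pi}{n}\right)/\left[\sin\left(\frac{\pi i}{n}\right)\sin\left(\frac{\pi(i+1)}{n}\right)\right]=\sin\left(\frac{\pi}{n}\right)\left[\cot\left(\frac{\pi i}{n}\right)-\cot\left(\frac{\pi(i+1)}{n}\right)\right]$. Collecting the three contributions gives the clean form
\begin{equation}
t_i=2\cos\left(\frac{\pi}{n}\right)+2\sin\left(\frac{\pi}{n}\right)\left[\cot\left(\frac{\pi i}{n}\right)-\cot\left(\frac{\pi(i+1)}{n}\right)\right].
\end{equation}

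Finally I would sum over $i=1,\dots,n-2$. The constant piece contributes $2(n-2)\cos\left(\frac{\pi}{n}\right)$, while the bracketed piece telescopes to $\cot\left(\frac{\pi}{n}\right)-\cot\left(\frac{\pi(n-1)}{n}\right)$. Using $\cot\left(\pi-\frac{\pi}{n}\right)=-\cot\left(\frac{\pi}{n}\right)$ this difference equals $2\cot\left(\frac{\pi}{n}\right)$, so the telescoping contribution is $2\sin\left(\frac{\pi}{n}\right)\cdot 2\cot\left(\frac{\pi}{n}\right)=4\cos\left(\frac{\pi}{n}\right)$. Adding the two pieces yields $2(n-2)\cos\left(\frac{\pi}{n}\right)+4\cos\left(\frac{\pi}{n}\right)=2n\cos\left(\frac{\pi}{n}\right)$, which is the claim. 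I do not anticipate a genuine obstacle; this is a routine manipulation. The only two points that need a moment's attention are (i) choosing the right way to expand $\sin\left(\frac{\pi}{n}\right)$ so that the third term telescopes with exactly the same cotangent difference as the first two, and (ii) the boundary identity $\cot\left(\frac{\pi(n-1)}{n}\right)=-\cot\left(\frac{\pi}{n}\right)$, which is precisely what supplies the extra $4\cos\left(\frac{\pi}{n}\right)$ promoting $2(n-2)\cos\left(\frac{\pi}{n}\right)$ to $2n\cos\left(\frac{\pi}{n}\right)$.
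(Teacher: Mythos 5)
Your proposal is correct and follows essentially the same route as the paper's proof: expand $t_i$ over the common denominator, use the sine addition formula (including rewriting $\sin(\pi/n)$ as $\sin(\pi(i+1)/n-\pi i/n)$) to reach $t_i=2\cos(\pi/n)+2\sin(\pi/n)\left[\cot\left(\tfrac{\pi i}{n}\right)-\cot\left(\tfrac{\pi(i+1)}{n}\right)\right]$, and then telescope. Your only addition is making the boundary identity $\cot\left(\tfrac{\pi(n-1)}{n}\right)=-\cot\left(\tfrac{\pi}{n}\right)$ explicit, which the paper uses implicitly.
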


\section{Self-testing the measurements}
Let us begin by recalling the LG functional
\begin{eqnarray}\label{lg2}
     \mathcal{L}&=&\frac{1}{2}\sum_{x=1}^{n-1}\left\langle\{\mathcal{A}_x,\mathcal{A}_{x+1}\}\right\rangle-\frac{1}{2}\left\langle\{\mathcal{A}_n,\mathcal{A}_1\}\right\rangle.
\end{eqnarray}
Consider now the following observables
\begin{eqnarray}\label{obs1}
    \tilde{\mathcal{A}}_x=\cos{\frac{\pi (x-1)}{n}}\sigma_z+\sin{\frac{\pi (x-1)}{n}}\sigma_x
\end{eqnarray}
where $\sigma_z, \sigma_x$ are the Pauli $z,x$ matrices. Then, one obtains the maximal quantum value of \eqref{lg2} to be $\beta_Q(n)=n\cos{\frac{\pi}{n}}$.
Consider now the following operators $P_i$ for $i=1,\ldots,n-2$ given by
\begin{eqnarray}\label{P}
    P_i=\mathcal{A}_i-\alpha_ i \mathcal{A}_{i+1}+\beta_ i \mathcal{A}_{n}
\end{eqnarray}
where 
\begin{eqnarray}\label{alpha}
    \alpha_i=\frac{\sin\left(\frac{\pi i}{n}\right)}{\sin\left(\frac{\pi (i+1)}{n}\right)},\qquad \beta_i=\frac{\sin\left(\frac{\pi}{n}\right)}{\sin\left(\frac{\pi (i+1)}{n}\right)}.
\end{eqnarray}
We now observe that
\begin{equation}\label{sos1}
    \sum_{i=1}^{n-2}\frac{1}{2\alpha_i}P_i^{\dagger}P_i=\sum_{i=1}^{n-2}\frac{1}{2\alpha_i}\left[(1+\alpha_i^2+\beta_i^2)\I-\alpha_i\{\mathcal{A}_i,\mathcal{A}_{i+1}\}+\beta_i\{\mathcal{A}_i,\mathcal{A}_{n}\}-\alpha_i\beta_i\{\mathcal{A}_n,\mathcal{A}_{i+1}\}\right]=\beta_Q(n)\I-\hat{\mathcal{L}}
\end{equation}
where we used the fact that $\mathcal{A}_i^2=\I$ and $\alpha_{i+1}\beta_i=\beta_{i+1}$.

Now, let us assume that one observes the value $ \beta_Q(n)$ of the LG functional $\mathcal{L}$ \eqref{lg2}. Thus from the decomposition \eqref{sos1}, we have that 
\begin{eqnarray}\label{sosrel2}
    \Tr(P_i^{\dagger}P_i\rho_A)=0, \qquad i=1,\ldots,n-2.
\end{eqnarray}
The above relation \eqref{sosrel2} will be particularly useful for self-testing as stated below.

\setcounter{thm}{0}
\begin{thm}\label{Theo1M} 
Assume that the Zeno conditions \eqref{zenocond} are satisfied and the LG inequality \eqref{lg2} is maximally violated by some state $\rho_A$ and observables $\mathcal{A}_i\ (i=1,\ldots,n)$. Then, the following statements hold true:
\\
\\
1. The observables $\mathcal{A}_i$ act on the Hilbert space  $\mathcal{H}_{A}=(\mathbbm{C}^2)_{A'}\otimes \mathcal{H}_{A''}$ for some auxiliary Hilbert space $\mathcal{H}_{A''}$.\\
\\
2.  \ \  There exist unitary transformations, $\mathcal{U}:\mathcal{H}_A\rightarrow\mathcal{H}_A$,  such that
\begin{eqnarray}\label{lem1.2}
\mathcal{U}\mathcal{A}_i\mathcal{U}^{\dagger}=\tilde{\mathcal{A}}_i\otimes\mathbbm{1}_{A''}.
\end{eqnarray}
where the observables $\tilde{\mathcal{A}}_i$ are listed in Eq. \eqref{obs1}.
\end{thm}
\begin{proof}

Let us begin by considering the relation \eqref{sosrel2} which can be rewritten as $||P_i\sqrt{\rho_A}||=0$ for $i=1,\ldots,n-2$ and thus we obtain that $ P_i\sqrt{\rho_A}=0$. As $\rho_A$ is full-rank, we simply arrive at the condition $P_i=0$ which can be expanded using \eqref{P} to obtain
\begin{eqnarray}\label{sosrel5}
   \mathcal{A}_i=\alpha_ i \mathcal{A}_{i+1}-\beta_ i \mathcal{A}_{n}\qquad i=1,\ldots,n-2.
\end{eqnarray}
Let us now consider $i=1$ in the above formula \eqref{sosrel5} and substitute $\alpha_1,\beta_1$ from Eq. \eqref{alpha} to arrive at
\begin{eqnarray}\label{sosrel6}
     \mathcal{A}_1=\frac{1}{2\cos\left(\frac{\pi}{n}\right)}( \mathcal{A}_{2}-\mathcal{A}_{n}).
\end{eqnarray}
Again using the fact that $ \mathcal{A}_i^2=\I$, allows us to conclude from the above formula \eqref{sosrel6}
\begin{eqnarray}
    \frac{1}{4\cos^2\left(\frac{\pi}{n}\right)}( \mathcal{A}_{2}-\mathcal{A}_{n})^2=\I
\end{eqnarray}
which on further expansion gives us
\begin{eqnarray}\label{sosrel7}
    \{\mathcal{A}_2,\mathcal{A}_n\}=2\left[1-2\cos^2\left(\frac{\pi}{n}\right)\right]\I.
\end{eqnarray}
Let us now show that the observables $\mathcal{A}_i$ for any $i$ are traceless. For this purpose, we consider the above formula \eqref{sosrel7} and multiply it with $\mathcal{A}_2$ and then take the trace to obtain
\begin{eqnarray}\label{tr1}
    \Tr(\mathcal{A}_n)=\left[1-2\cos^2\left(\frac{\pi}{n}\right)\right]\Tr(\mathcal{A}_2).
\end{eqnarray}
Again, we consider Eq. \eqref{sosrel7} and multiply it with $\mathcal{A}_n$ and then take the trace to obtain
\begin{eqnarray}\label{tr2}
    \Tr(\mathcal{A}_2)=\left[1-2\cos^2\left(\frac{\pi}{n}\right)\right]\Tr(\mathcal{A}_n).
\end{eqnarray}
It is straightforward from Eqs. \eqref{tr1} and \eqref{tr2}, that $\Tr(\mathcal{A}_2)=\Tr(\mathcal{A}_n)=0$ for any $n\geq3$. Further on, taking trace on both sides of Eq. \eqref{sosrel5} for any $i$, allows us to conclude that $\Tr(\mathcal{A}_i)=0$. Thus, the number of eigenvalues $(1,-1)$ of the observables $\mathcal{A}_i $ are equal. Consequently, the  observables $\mathcal{A}_i $ act on $\mathbbm{C}^2\otimes\mathcal{H}_{A''}$.

Let us now characterize the observables $\mathcal{A}_i$. For this purpose, we observe from \eqref{sosrel7} that
\begin{eqnarray}\label{sosrel11}
     \frac{1}{4\sin^2\left(\frac{\pi}{n}\right)}( \mathcal{A}_{2}+\mathcal{A}_{n})^2=\I.
\end{eqnarray}
Let us further notice that $\{\mathcal{A}_{2}-\mathcal{A}_{n},\mathcal{A}_{2}+\mathcal{A}_{n}\}=0$ which can rewritten as
\begin{equation}\label{sosrel10}
    \left\{\frac{1}{2\cos\left(\frac{\pi}{n}\right)}( \mathcal{A}_{2}-\mathcal{A}_{n}),\frac{1}{2\sin\left(\frac{\pi}{n}\right)}( \mathcal{A}_{2}+\mathcal{A}_{n})\right\}=0.
\end{equation}
As proven in \cite{Jed1}, if two matrices $A,B$ anti-commute and $A^2=B^2=\I$, then there exist a unitary transformation $\mathcal{U}$ such that $\mathcal{U}A\mathcal{U}^{\dagger}=\sigma_z\otimes\I$ and $\mathcal{U}B\mathcal{U}^{\dagger}=\sigma_x\otimes\I$. Thus, from Eqs. \eqref{sosrel6}, \eqref{sosrel11} and \eqref{sosrel10} we obtain that
\begin{eqnarray}\label{sosrel12}
     \mathcal{A}'_{2}-\mathcal{A}'_{n}&=&2\cos\left(\frac{\pi}{n}\right)\sigma_z\otimes\I,\nonumber\\  \mathcal{A}'_{2}+\mathcal{A}'_{n}&=&2\sin\left(\frac{\pi}{n}\right)\sigma_x\otimes\I
\end{eqnarray}
where $\mathcal{A}'_{i}=\mathcal{U}\mathcal{A}_{i}\mathcal{U}^{\dagger}$. Thus, we obtain from Eqs. \eqref{sosrel6} and \eqref{sosrel12} that
\begin{eqnarray}\label{sosrel15}
    \mathcal{A}'_{1}&=&\sigma_z\otimes\I\nonumber\\
    \mathcal{A}'_{2}&=&\left(\cos{\frac{\pi}{n}}\sigma_z+\sin{\frac{\pi}{n}}\sigma_x\right)\otimes\I\nonumber\\
    \mathcal{A}'_{n}&=&\left(-\cos{\frac{\pi}{n}}\sigma_z+\sin{\frac{\pi}{n}}\sigma_x\right)\otimes\I.
\end{eqnarray}
Now, let us consider the condition \eqref{sosrel5} for $i=2$ and apply $\mathcal{U}$ on both the sides to obtain
\begin{eqnarray}
    \mathcal{A}'_2=\alpha_ 2 \mathcal{A}'_{3}-\beta_ 2 \mathcal{A}'_{n}.
\end{eqnarray}
Now, substituting $\alpha_2,\beta_2$ from Eq. \eqref{alpha} and $\mathcal{A}'_2,\mathcal{A}'_n$ from \eqref{sosrel15} and then after some trigonometric simplification, we obtain
\begin{eqnarray}
    \mathcal{A}'_{3}=\left(\cos{\frac{2\pi}{n}}\sigma_z+\sin{\frac{2\pi}{n}}\sigma_x\right)\otimes\I.
\end{eqnarray}
Continuing in a similar fashion for all $i's$ allows us to conclude that for $i=1,2,\ldots,n$
\begin{eqnarray}
    \mathcal{A}_i'=\left(\cos{\frac{\pi (i-1)}{n}}\sigma_z+\sin{\frac{\pi (i-1)}{n}}\sigma_x\right)\otimes\I.
\end{eqnarray}
This completes the proof.
\end{proof}

\section{Robustness to experimental errors}

\begin{thm}
    Suppose that the observables in the actual experiment are close to the ideal ones as
    \begin{eqnarray}\label{obserr}
    ||(\mathcal{A}_i-\mathcal{A}'_i)\sqrt{\rho}||\leq \varepsilon
    \end{eqnarray}
where $\mathcal{A}'_i=\mathcal{U}\left(\tilde{\mathcal{A}}_i\otimes\I\right) \mathcal{U}^{\dagger}$ and $\tilde{\mathcal{A}}_i$ are listed in Eq. \eqref{obs1} with $\mathcal{A}_i$ being projective. Here $\rho$ is the actual state during the experiment. Then, the LG inequality \eqref{lg2} is violated close to the quantum bound as
\begin{eqnarray}
    \mathcal{L}\geq \beta_Q(n)-\frac{n(1+2\cos(\pi/n))}{2}\varepsilon.
\end{eqnarray}
\end{thm}

\begin{proof}
 To begin with, let us consider the sum of squares decomposition of the LG inequality \eqref{sos1} and rewrite it as
 \begin{eqnarray}\label{robu2}
    \mathcal{L}=\Tr \left(\hat{\mathcal{L}}\rho\right)= -\sum_{i=1}^{N-2}\frac{1}{2\alpha_i}||P_i\sqrt{\rho}||+\frac{1}{2}\sum_{i=1}^{n-2}\left(\frac{1}{\alpha_i}||\mathcal{A}_i\sqrt{\rho}||\right.\nonumber\\ \left.+\alpha_i||\mathcal{A}_{i+1}\sqrt{\rho}||+\frac{\beta_i^2}{\alpha_i}||\mathcal{A}_n\sqrt{\rho}||\right).\quad
 \end{eqnarray}
 To find the lower bound to $ \mathcal{L}$, we find the lower bound to $||\mathcal{A}_i\sqrt{\rho}||$ for $i=1,\ldots,n$ and upper bound to $||P_i\sqrt{\rho}||$ for $i=1,\ldots,n-2$.

    Let us first find the lower bound of $||\mathcal{A}_i\sqrt{\rho}||$ for all $i$. For this purpose, we consider the expression \eqref{obserr} and expand it using the identity: $||a|-|b||\leq|a-b|$ to obtain
    \begin{eqnarray}\label{robu1}
         -\varepsilon\leq||\mathcal{A}_i\sqrt{\rho}||-||\mathcal{A}'_i\sqrt{\rho}||\leq \varepsilon.
    \end{eqnarray}
    As $\mathcal{A}'_i$ is unitary for any $i$ [see Eq. \eqref{obs1}] and consequently $||\mathcal{A}'_i\sqrt{\rho}||=1$, we obtain from \eqref{robu1} that
    \begin{eqnarray}\label{robu3}
        ||\mathcal{A}_i\sqrt{\rho}||\geq1-\varepsilon.
    \end{eqnarray}
    
   Let us now find the upper bound to $||P_i\sqrt{\rho}||$ for all $i$. For this purpose, let us first observe that 
    \begin{eqnarray}
\tilde{\mathcal{A}}_i=\alpha_ i \tilde{\mathcal{A}}_{i+1}-\beta_ i \tilde{\mathcal{A}}_{n}
    \end{eqnarray}
    where $\tilde{\mathcal{A}}_i$ are the ideal observables listed in Eq. \eqref{obs1} and $\alpha_ i,\beta_i$ are given in Eq. \eqref{alpha}. Now, it is simple to observe from Eq. \eqref{P} that
    \begin{eqnarray}
        ||P_i\sqrt{\rho}||=||(\mathcal{A}_i-\mathcal{A}_i')\sqrt{\rho}-\alpha_ i (\mathcal{A}_{i+1}-\mathcal{A}_{i+1}')\sqrt{\rho}\nonumber\\+\beta_ i (\mathcal{A}_{n}-\mathcal{A}_{n}')\sqrt{\rho}||.
    \end{eqnarray}
    Now using triangle inequality, we obtain that
    \begin{eqnarray}
         ||P_i\sqrt{\rho}||\leq||(\mathcal{A}_i-\mathcal{A}_i')\sqrt{\rho}||+\alpha_ i ||(\mathcal{A}_{i+1}-\mathcal{A}_{i+1}')\sqrt{\rho}||\nonumber\\+\beta_ i ||(\mathcal{A}_{n}-\mathcal{A}_{n}')\sqrt{\rho}||.\quad
    \end{eqnarray}
    which utilising \eqref{obserr} gives us
    \begin{eqnarray}\label{robu4}
        ||P_i\sqrt{\rho}||\leq(1+\alpha_i+\beta_i)\varepsilon\qquad\forall i.
    \end{eqnarray}
Thus, from Eqs. \eqref{robu2}, \eqref{robu3} and $\eqref{robu4}$ we obtain that
\begin{eqnarray}
    \mathcal{L}\geq\beta_Q(n)-\sum_{i=1}^{N-2}\frac{(1+\alpha_i+\beta_i)}{2\alpha_i}\varepsilon.
\end{eqnarray}

\end{proof}

\section{A possible protocol for implementation}

Here we present a possible protocol for implementing the randomness generation scheme in an optical setup. For simplicity, we consider the sequential scenario [see Fig. 1 of the manuscript] when the number of measurements $n=4$. Let us stress that we do not consider all the practical constraints that might affect the experiment but present it from a more theoretical standpoint.

\begin{itemize}
    \item {\bf{Source.}} The source is prepared by the user. As there does not need to be any control on the source even sending some thermal light into the device is sufficient.
    \item {\bf{Measurements.}} The measurement could be the simple optical implementation of the measurements $\{Z,X,(X-Z)/\sqrt{2},(X+Z)/\sqrt{2}\}$. For instance, one can follow the approach of \cite{LGexp3}.
    \item {\bf{Parameter estimation.}} In some rounds of the experiment, the user has to estimate the value of the Leggett-Garg functional $\mathcal{L}$ \eqref{lg2}. For this purpose, the user needs to input $4$ bits of randomness for each round of the estimation. This comes from the fact that in each round of parameter estimation, one has to freely choose two inputs for evaluating $\mathcal{L}$. 
     \item {\bf{Randomness extraction.}} In all the other rounds, (or even in the rounds of the parameter estimation), the incoming signal should be measured sequentially as long as the signal can be detected by the measurement devices. If the signal can be measured sequentially for $N$ times, then one can obtain $N-1$ bits of certified genuine randomness from each round of the experiment. 
\end{itemize}

\end{document}